\theoremstyle{definition}
\newtheorem{theorem}{Theorem}[section]
\newtheorem{definition}[theorem]{Definition}
\newtheorem{proposition}[theorem]{Proposition}
\newtheorem{example}[theorem]{Example}
\newtheorem{remark}[theorem]{Remark}
\newcommand*{\argmin}{\operatornamewithlimits{argmin}\limits}
\newcommand{\BE}[1]{\mathbb{E}\left({#1}\right)}
\newcommand{\BP}[1]{\mathbb{P}\left({#1}\right)}
\newcommand{\BR}{\mathbb{R}}
\newcommand{\ve}{\varepsilon}
\newcommand{\norm}[1]{\left\lVert#1\right\rVert}
\numberwithin{equation}{section}
\date{\today}
\begin{document}
\title{Risk minimization and portfolio diversification}

\author{Farzad Pourbabaee \footnote{Department of Mathematics and Statistics, McMaster University, 1280 Main Street West, Hamilton, ON, L8S 4K1, Canada, farzad.pourbabaee@math.mcmaster.ca} \and Minsuk Kwak \footnote{Department of Mathematics and Statistics, McMaster University, 1280 Main Street West, Hamilton, ON, L8S 4K1, Canada, minsuk.kwak@gmail.com} \and Traian A.~Pirvu\footnote{Department of Mathematics and Statistics, McMaster University, 1280 Main Street West, Hamilton, ON, L8S 4K1, Canada, tpirvu@math.mcmaster.ca}}

\maketitle

\begin{abstract}
We consider the problem of minimizing capital at risk in the Black-Scholes setting. The portfolio problem is studied  given the possibility that a correlation constraint between the portfolio and a financial index is imposed. The optimal portfolio is obtained in closed form. The effects of the correlation constraint are explored; it turns out that this portfolio constraint leads to a more diversified portfolio.  
\end{abstract}

\section{Introduction}
There is a common sense among financial investors to maximize the portfolio return while satisfying some risk constraint. Mean-variance techniques to address this problem have been developed in \cite{markowitz1991foundations}. The objective in portfolio selection is decreasing the investment downside risk (this risk is quantified through various measures like VaR). These notions in portfolio selection and risk management result in a great deal of literature that has been published yet. For instance, the notion of value at risk, as the $\alpha$ quantile subtracted from the \textit{mean} of portfolio return has been thoroughly investigated in \cite{duffie1997overview, jorion2007value}, which turns out to suffer from being non-coherent risk measure. Capital at risk (CaR) is introduced to obviate this problem, which differs from VaR through a constant (it is VaR adjusted to the riskless return). Some other risk measures such as average value at risk (AVaR) and limited expected loss (LEL) were introduced to address the shortcomings of VaR. Analytical formulas for these types of risk measures, as well as risk constrained portfolio optimization in continuous time framework are provided in \cite{gambrah2014risk}.

Portfolio selection under bounded capital at risk is well explored in \cite{emmer2001optimal}. In a Black-Scholes setting with constant coefficients, they obtained a closed form solution for the optimal portfolio with maximum mean return, which has a bounded CaR. It has been shown in \cite{emmer2001optimal} that the use of merely variance based measures leads to a decreased proportion of risky assets in a portfolio when the planning horizon increases, effect not exhibited when CaR is employed; this argument comes to support the use of CaR in risk management. \cite{emmer2001optimal} has been extended to allow for no-short-selling constraint in \cite{dmitravsinovic2011dynamic}. 
  \cite{dmitravsinovic2006asymptotic} has shown the counterintuitive behaviour in VaR constrained optimized portfolios, due to an increase in investment time horizon T, effect does not appear when CaR is employed.

The literature on portfolio selection subject to a correlation constraint is rather limited. Here we only mention \cite{bernard2014mean}. They studied the mean variance optimal portfolios in the presence of a stochastic benchmark correlated to the market, and discussed how their method could be used to detect fraud in financial reports; for instance under some conditions one could not claim of a positive Sharpe ratio, while having a negative correlation with market index.

Our paper provides in closed form the optimal portfolio for minimizing CaR in Black-Scholes setting, subject to a correlation constraint between an investor's portfolio and an index/target process (modelled as a geometric Brownian motion (GBM)). One possible choice of this index/target is the optimal growth portfolio (see \cite{bernard2014mean}). The addition of the correlation constraint is useful in some situations; for instance, maintaining a negative correlation with the index/target will let us to better control the risky investment during the market crash (that is when the market index is declining heavily); in such situations the negative correlation could rescue our portfolio from falling down. By analyzing the closed form solutions of the constrained and unconstrained portfolio selection problems, we notice that the correlation constraint leads to a more diversified portfolios (we use variance as a measure of diversification).

The rest of the paper is organized as follows: In Section \ref{CaRwithout} we introduce the market model, our specific notion of CaR. Section 3 briefly reiterates the minimization problem of capital at risk. Section \ref{CaRwith} is devoted to the problem of CaR minimization under correlation constraint. A particular market model is considered in Section \ref{CaRc}, which is followed by some numerical examples. The paper concludes in Section \ref{conclusion}. The proofs of our results are delegated to an appendix. 
\section{Model}
\label{CaRwithout}
Consider a probability space $(\Omega,\{\mathcal{F}_t\}_{0\leq t\leq
T}, {P}),$ which accommodates a standard multidimensional Brownian
motion. Throughout this paper we restrict ourselves to a geometric Brownian motion model for stock prices.
Let us consider a financial market model with the following specifications:
\begin{itemize}[leftmargin=*]
\item Assets are traded continuously over a finite time horizon $[0,T]$ in a frictionless market.

\item There is one riskless asset, denoted by $S_0(t)$, with positive constant interest rate $r$ \begin{equation}\frac{dS_0(t)}{S_0(t)}=r dt.
\end{equation}

\item There are $d$ risky assets (stocks), driven by a $d$-dimensional Brownian motion:
\begin{align}
\label{stocks}
\frac{dS_i(t)}{S_i(t)}=(r+b_i) dt + \sum_{i=1}^{d}\sigma_{ij}dW_j(t),\quad S_i(0)=s_i,\quad i=1,\dots,d.
\end{align}
Here $[\sigma_{ij}]$ is the square volatility matrix, which we assume to be invertible, and $b = (b_1,\ldots,b_d)'$ is the vector of excess return rate of each risky asset, which we assume positive.

\item Let $\pi = (\pi_1,\ldots,\pi_d)' \in \BR^d$ be the portfolio vector of the investor, where $\pi_i$ indicates the fraction of total initial wealth $x$ invested in stock $i$, which is assumed constant. Thus, $\pi_0=1-\mathbf{1}'\pi$ is the fraction of wealth invested in the risk free bond. We draw our attention on constant proportions, which are time invariant, assumption needed for tractability. Constant $\pi$ does not mean there is no trade since one needs to rebalance the portfolio continuously to keep the portion invested in each asset constant over time.

The stochastic dynamic equation of the wealth process is:
\begin{align}
\frac{dX^{\pi}(t)}{X^{\pi}(t)}=(r+b'\pi)dt+\pi'\sigma dW(t),\quad X^{\pi}(0)=x. \label{dX}
\end{align}
Direct computations lead to:
\begin{align*}
&X^{\pi}(T) = x e^{[(r+b'\pi-\norm{\sigma' \pi}^2/2)T+\pi'\sigma W(T)]}\\
&\BE{X^{\pi}(T)} = x e^{(r+b'\pi)T}\\
& \text{Var}(X^{\pi}(T)) = x^2 e^{2(r+b'\pi)T}\left(e^{\norm{\sigma' \pi}^2T}-1\right)\\
& \text{Var}(\log X^{\pi}(T))  = T \pi'\sigma \sigma' \pi.
\end{align*}
\end{itemize}
In order to keep tractability we perform risk measurements for logarithmic returns (rather than arithmetic returns). It is well known that for small
time horizons the two types of returns are close to each other. Let us give a formal definition of CaR.
\begin{definition}[Capital-at-risk]
\label{CaR-def}
Let $z_\alpha$ be the $\alpha$-quantile of the standard gaussian distribution. The CaR of a fixed portfolio vector $\pi$ is defined as the difference of Log riskless return and the $\alpha$-quantile of the Log return over $[0,T]$:
\begin{equation}
\label{CaR}
\begin{aligned}
q(x,\pi,\alpha,T) &= \inf \{ z \in \BR: \BP{\log \left( \frac{X^{\pi}(T)}{X^{\pi}(0)} \right) \leq z} \geq \alpha\}\\
& = (r+b'\pi)T-\frac{1}{2}\norm{\sigma' \pi}^2T+ z_\alpha \norm{\sigma' \pi} \sqrt{T}\\
\text{CaR}(\pi,\alpha,T) &:= rT - q(x,\pi,\alpha,T)\\
&= -b'\pi T+\frac{1}{2}\norm{\sigma' \pi}^2T-z_\alpha \norm{\sigma' \pi} \sqrt{T}.
\end{aligned}
\end{equation}
We assume that $ \alpha < 0.5,$ which means $z_\alpha < 0.$ 
\end{definition}
\section{CaR Minimization}

CaR minimization is first explored in \cite{emmer2001optimal}, in which a closed form solution is found for the portfolio with maximum expected return under bounded CaR constraint. Next, let us state the proposition of this section which for portfolio arithmetic returns has been established in \cite{emmer2001optimal}. 
\begin{proposition}
\label{CaR-min}
The minimum CaR portfolio, i.e., the solution of
\begin{equation}
\label{riskminu}
\begin{aligned}
\argmin_{\pi\in\BR^d} ~~ &\text{CaR}(\pi,\alpha,T)\\
\end{aligned}
\end{equation}
 is
\begin{align}
\label{opt-pi}
\pi^* &= \left(\frac{z_\alpha}{\sqrt{T}}+\norm{\sigma^{-1}b}\right)^+ \frac{(\sigma \sigma')^{-1}b}{\norm{\sigma^{-1}b}}.
\end{align}
Moreover
\begin{align}
\label{opt-CaR}
\text{CaR}(\pi^*,\alpha,T) &= -\frac{T}{2} \left[\left(\frac{z_\alpha}{\sqrt{T}}+\norm{\sigma^{-1}b}\right)^+\right]^2.
\end{align}
\end{proposition}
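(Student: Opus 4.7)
The plan is to reduce the $d$-dimensional optimization to a one-dimensional convex problem by separating the \emph{direction} and the \emph{magnitude} of the transformed portfolio $\sigma'\pi$. Set $\eta := \|\sigma'\pi\|$. Since $\text{CaR}(\pi,\alpha,T)$ depends on $\pi$ only through the scalar $b'\pi$ and the scalar $\eta$, I would first fix $\eta$ and maximize $b'\pi$ over all $\pi$ with $\|\sigma'\pi\|=\eta$ (maximizing $b'\pi$ minimizes CaR because of the $-b'\pi T$ term). Writing $b'\pi = (\sigma^{-1}b)'(\sigma'\pi)$ and applying Cauchy--Schwarz yields $b'\pi \le \|\sigma^{-1}b\|\,\eta$, with equality iff $\sigma'\pi$ is a nonnegative multiple of $\sigma^{-1}b$. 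Inverting gives the optimal direction
\[
\pi_\eta \;=\; \eta\,\frac{(\sigma\sigma')^{-1}b}{\|\sigma^{-1}b\|}.
\]

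Substituting $\pi_\eta$ back into \eqref{CaR} reduces the problem to minimizing over $\eta\ge 0$ the scalar function
\[
f(\eta) \;=\; -\|\sigma^{-1}b\|\,\eta\,T \;+\; \tfrac{1}{2}\eta^{2}T \;-\; z_\alpha\,\eta\,\sqrt{T}.
\]
This $f$ is a convex quadratic in $\eta$, so the unconstrained minimizer is obtained from $f'(\eta)=0$, giving $\eta^{\circ} = \|\sigma^{-1}b\| + z_\alpha/\sqrt{T}$. Because $z_\alpha<0$, this critical point may be negative; imposing the feasibility constraint $\eta\ge 0$ (which comes for free since $\eta$ is a norm) yields
\[
\eta^{*} \;=\; \Bigl(\|\sigma^{-1}b\| + \tfrac{z_\alpha}{\sqrt{T}}\Bigr)^{+}.
\]
Plugging $\eta^{*}$ into $\pi_\eta$ delivers the formula \eqref{opt-pi}, and substituting into $f$ and simplifying (using $\eta^{*} - \|\sigma^{-1}b\| - z_\alpha/\sqrt{T} = 0$ when the positive part is active, and $\pi^{*}=0$ otherwise) produces $\text{CaR}(\pi^{*},\alpha,T) = -\tfrac{T}{2}(\eta^{*})^{2}$, which is exactly \eqref{opt-CaR}.

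The only subtlety, and the step I would treat most carefully, is the boundary case where $\|\sigma^{-1}b\|+z_\alpha/\sqrt{T}\le 0$: here the unconstrained quadratic minimizer is nonpositive, so the constrained optimum sits at the corner $\eta^{*}=0$, meaning $\pi^{*}=0$ and $\text{CaR}(\pi^{*},\alpha,T)=0$, matching the positive-part convention in \eqref{opt-pi}--\eqref{opt-CaR}. Uniqueness of the direction (when $\eta^{*}>0$) follows from the equality condition in Cauchy--Schwarz together with invertibility of $\sigma$. No other inequalities or convex-analytic tools are required.
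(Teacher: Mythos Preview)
Your proposal is correct and follows essentially the same approach as the paper: fix the level $\|\sigma'\pi\|=\varepsilon$, use Cauchy--Schwarz on $b'\pi=(\sigma^{-1}b)'(\sigma'\pi)$ to pin down the optimal direction, then minimize the resulting scalar quadratic over $\varepsilon\ge 0$. The only difference is cosmetic---you add explicit remarks on the boundary case and uniqueness, and you use the symbol $\eta$ for the radial variable (which the paper reserves for the target portfolio in Section~\ref{CaRwith}, so you may want to rename it).
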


\begin{proof}
See the appendix.
\end{proof}

\section{CaR Minimization under Correlation Constraint}
\label{CaRwith}
In this section we focus on minimizing the CaR subject to a correlation constraint. Namely, we want to find the optimal portfolio which minimizes the CaR, as well as satisfying a correlation constraint with another target/index portfolio. Assume that the target/index portfolio dynamics is given by:
\begin{align}
\frac{dY(t)}{Y(t)}=(r+b'\eta)dt+\eta'\sigma dW(t),\quad Y(0)=y. \label{dY}
\end{align}
Moreover, we assume that the target portfolio has positive excess return over risk-free rate, i.e $b'\eta > 0$, and enforce that the correlation between the Log values of $X(T)$ and $Y(T)$ does not exceed a negative threshold. This condition is expressed as:
\begin{align}
\label{cond}
\text{Corr}(\log X(T),\log Y(T)) \leq -\delta, \quad \text{where} \quad \delta \geq 0. 
\end{align}
The $Y$ process can be any financial index or portfolio process which is driven by the same sources of uncertainty as stocks in the market. 
The correlation between terminal Log values of the processes is found in closed form:
\begin{align}
\text{Corr}(\log X(T),\log Y(T)) = \frac{\pi'\sigma \sigma' \eta}{\norm{\sigma' \pi} \norm{\sigma' \eta}}.
\end{align}
Then the risk minimization problem under correlation constraint would be:
\begin{equation}
\label{riskmin}
\begin{aligned}
\min_{\pi\in\BR^d} ~~ &\text{CaR}(\pi,\alpha,T)\\
\text{subject to} ~~ &\text{Corr}(\log X(T), \log Y(T)) \leq -\delta . \\
\end{aligned}
\end{equation}
Which is equivalent to:
\begin{equation}
\label{optprob}
\begin{aligned}
\min_{\pi\in\BR^d} ~ &\left(-b'\pi T+\frac{1}{2}\norm{\sigma' \pi}^2T-z_\alpha \norm{\sigma' \pi} \sqrt{T}\right)\\
\text{subject to} ~~ & \delta \norm{\sigma' \eta}\norm{\sigma' \pi}+\pi'\sigma \sigma' \eta \leq 0.
\end{aligned}
\end{equation}
The requirement for negative correlation is imposed for tractability reasons (it renders the optimization problem convex). Moreover, it has
a financial interpretation: in times of financial downturn it is desirable to be negatively correlated with the market portfolio.  
Due to the explicit formulation of the problem, the solution can be found analytically with convex optimization methods. The following theorem presents the main result of the paper. 
\begin{theorem}
\label{main}
The optimal portfolio which solves \eqref{optprob} is:
\begin{align}\label{pic}
\pi^*= \frac{\sqrt{1-\delta^2}\left(\frac{z_\alpha \norm{\sigma' \eta}}{\sqrt{T}}+\sqrt{1-\delta^2}\sqrt{\norm{\sigma^{-1}b}^2\norm{\sigma' \eta}^2 -(b' \eta)^2}- \delta b'\eta\right)^+}{T\sqrt{\norm{\sigma^{-1}b}^2\norm{\sigma' \eta}^2 -(b' \eta)^2}}\left((\sigma\sigma')^{-1}bT-\lambda^*\eta\right).
\end{align}
The parameter $\lambda^*$ is given by:
\begin{align}
\lambda^* = \frac{1}{\norm{\sigma' \eta}^2} \left(b' \eta T+\frac{T\delta}{\sqrt{1-\delta^2}}\sqrt{\norm{\sigma^{-1}b}^2\norm{\sigma' \eta}^2 -(b' \eta)^2}\right).
\end{align}
and
\begin{align}
\text{CaR}(\pi^*,\alpha,T) = \frac{-T}{2\norm{\sigma'\eta}^2} \left[\left(\frac{z_\alpha \norm{\sigma' \eta}}{\sqrt{T}}+\sqrt{1-\delta^2}\sqrt{\norm{\sigma^{-1}b}^2\norm{\sigma' \eta}^2 -(b' \eta)^2}- \delta b'\eta\right)^+\right]^2.
\end{align}
\end{theorem}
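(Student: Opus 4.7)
The program is to recognize the problem as convex and exploit KKT. First, I would note that $f(\pi) := -b'\pi T + \frac12 \norm{\sigma'\pi}^2 T - z_\alpha \norm{\sigma'\pi} \sqrt{T}$ is convex (since $z_\alpha < 0$ and $\norm{\sigma'\pi}$ is convex), and the constraint function $g(\pi) := \delta\norm{\sigma'\eta}\norm{\sigma'\pi} + \pi'\sigma\sigma'\eta$ is convex as a sum of a norm term and a linear term. Slater's condition holds (any small enough $\pi$ antiparallel to $\sigma\sigma'\eta$ makes $g(\pi)<0$), so the KKT conditions are necessary and sufficient.

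Next I would form the Lagrangian $L(\pi,\lambda)=f(\pi)+\lambda g(\pi)$ with $\lambda\ge 0$. For $\pi\ne 0$, stationarity $\nabla_\pi L = 0$ gives
\begin{equation*}
-bT + T\sigma\sigma'\pi + \bigl(\lambda\delta\norm{\sigma'\eta} - z_\alpha\sqrt{T}\bigr)\frac{\sigma\sigma'\pi}{\norm{\sigma'\pi}} + \lambda\,\sigma\sigma'\eta = 0.
\end{equation*}
Multiplying by $(\sigma\sigma')^{-1}$ and isolating $\pi^*$, one obtains
\begin{equation*}
\pi^*\Bigl(T + \tfrac{\lambda^*\delta\norm{\sigma'\eta}-z_\alpha\sqrt{T}}{\norm{\sigma'\pi^*}}\Bigr) = T(\sigma\sigma')^{-1}b - \lambda^*\eta,
\end{equation*}
which already identifies $\pi^*$ as a positive scalar multiple of $T(\sigma\sigma')^{-1}b-\lambda^*\eta$, matching the structural form (4.8).

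To pin down $\lambda^*$, I would argue the correlation constraint is active: the unconstrained minimizer from Proposition \ref{CaR-min} is a positive multiple of $(\sigma\sigma')^{-1}b$, and under the standing assumption $b'\eta>0$ it satisfies $\pi'\sigma\sigma'\eta>0$, violating (4.4). Writing $\pi^* = C\bigl(T(\sigma\sigma')^{-1}b-\lambda^*\eta\bigr)$ with $C>0$ and substituting into $g(\pi^*)=0$, the constant $C$ cancels, leaving
\begin{equation*}
\delta\norm{\sigma'\eta}\sqrt{T^2\norm{\sigma^{-1}b}^2 - 2T\lambda^*b'\eta + (\lambda^*)^2\norm{\sigma'\eta}^2} = \lambda^*\norm{\sigma'\eta}^2 - Tb'\eta.
\end{equation*}
Squaring yields a quadratic in $\lambda^*$; I would select the root obeying $\lambda^*\norm{\sigma'\eta}^2 \ge Tb'\eta$ (required by the unsquared equation), producing precisely the stated $\lambda^*$. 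The scalar $C$ is then recovered from the stationarity relation by substituting the now-explicit value of $\norm{\sigma'\pi^*}$ (obtained from the active constraint as $\norm{\sigma'\pi^*}= -\pi^{*'}\sigma\sigma'\eta/(\delta\norm{\sigma'\eta})$), and plugging into $f(\pi^*)$ gives the CaR value. The $(\cdot)^+$ appears naturally: when the quantity inside is $\le 0$ there is no $\pi\ne 0$ satisfying stationarity with $C>0$, so $\pi^* = 0$ and both the optimal portfolio and CaR vanish.

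\textbf{Main obstacle.} The conceptual content is light — a convex program with a single active constraint — but the grind lies in extracting $\lambda^*$ cleanly from the squared active-constraint equation, choosing the correct root, and then simplifying $C$ and the substituted objective so that the final expressions collapse to the compact forms in the theorem. The book-keeping involves the three scalars $\norm{\sigma^{-1}b}$, $\norm{\sigma'\eta}$, $b'\eta$ and requires keeping the identities $b'(\sigma^{-1})'\sigma'\eta=b'\eta$ and $\sigma'(\sigma\sigma')^{-1}=\sigma^{-1}$ at hand; I would also need to verify carefully that the $(\cdot)^+$ unifies the interior case ($\rho>0$) with the degenerate case $\pi^*=0$.
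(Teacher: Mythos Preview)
Your approach is correct and essentially coincides with the paper's: both recognize the problem as convex, argue that the constraint must bind because the unconstrained minimizer (a positive multiple of $(\sigma\sigma')^{-1}b$) violates it under $b'\eta>0$, form the Lagrangian, deduce that $\pi^*$ is a scalar multiple of $(\sigma\sigma')^{-1}bT-\lambda^*\eta$, and solve the same quadratic for $\lambda^*$. The only cosmetic difference is that the paper minimizes the Lagrangian by a two-step device---first maximizing $b'\pi T-\lambda\eta'\sigma\sigma'\pi$ on the ellipse $\norm{\sigma'\pi}=\varepsilon$ via Cauchy--Schwarz, then optimizing over $\varepsilon\ge 0$---rather than writing $\nabla_\pi L=0$ directly; this sidesteps the nondifferentiability of $\norm{\sigma'\pi}$ at $0$ and makes the emergence of the $(\cdot)^+$ slightly more transparent, but the two routes are equivalent.
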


\begin{proof}
See the appendix.
\end{proof}

\begin{remark}
The optimal portfolio in \eqref{pic} displays a two fund separation structure, first component is similar to the solution in \eqref{opt-pi}, and the second component 
is induced by the correlation constraint. As it can be seen from the closed form expressions, unless the zero portfolio is optimal, the inequality correlation constraint always binds $\lambda^* > 0$, and this  means the optimal portfolio pushes itself all the way to have the correlation with target portfolio equal to $-\delta$.
\end{remark}
\section{Pricing Kernel Inverse as the Benchmark Portfolio}
\label{CaRc}
In this section we study the special case of a target/index portfolio. The assets are regrouped into two parts: the mutual funds accessible to both the investor's portfolio and the target/index portfolio, and some extra risky assets available to the portfolio manager (but not entering into the composition of  target/index portfolio). Let us decompose the Brownian motion vector as $W(t)=(W_1(t)' ~~ W_2(t)')'$, where the first component is  $m$ dimensional. Moreover without any loss of generality we can represent the volatility matrix and its inverse as:
\begin{align}
\label{volmat}
\sigma=\begin{bmatrix}
\sigma_{11} & \mathbf{0}\\
\sigma_{21} & \sigma_{22}
\end{bmatrix} , \quad
\sigma^{-1}=\begin{bmatrix}
\sigma_{11} ^{-1} & \mathbf{0}\\
-\sigma_{22}^{-1}\sigma_{21}\sigma_{11}^{-1} & \sigma_{22}^{-1}
\end{bmatrix}.
\end{align}
Here $\sigma_{11}$ and $\sigma_{22}$ are square $m$ and $d-m$ matrices, respectively, and $\sigma_{21}$ and $\mathbf{0}$ are $(d-m \times m)$ and $(m \times d-m)$ zero matrices, respectively. Consequently, there are two types of stocks: the first type is only driven by $W_1(t)$ and the second type is driven by both components of the Brownian motion. Following the setting from the previous section the manager's portfolio is expressed as:
\begin{align}
\nonumber
\frac{dX(t)}{X(t)}&= (r+b' \pi)dt + \begin{bmatrix} \pi'_1& \pi'_2 \end{bmatrix} \begin{bmatrix}
\sigma_{11} & \mathbf{0}\\
\sigma_{21} & \sigma_{22}
\end{bmatrix} \begin{bmatrix} dW_1(t) \\ dW_2(t)\end{bmatrix}\\
&=(r+b' \pi)dt + (\pi'_1\sigma_{11}+\pi'_2\sigma_{21})dW_1(t)+\pi'_2\sigma_{22} dW_2(t).
\end{align}
Here $ \pi '= \begin{bmatrix} \pi'_1& \pi'_2 \end{bmatrix}$ is the portfolio vector, and $b' = \begin{bmatrix} b_1' & b_2' \end{bmatrix}$ is the excess return vector of the first and second type of stocks respectively.

Following \cite{bernard2014mean} the target/index portfolio is taken to be the optimal growth portfolio, which in our setup is the inverse of pricing kernel. 
Let us re-emphasize that on constructing the market portfolio only the first type of stocks is considered. Hence if $\theta = \sigma_{11}^{-1}b_1$ denotes the market price of risk of first type stocks, the pricing kernel $Z$ is given by
\begin{align}
\frac{dZ(t)}{Z(t)} = -\theta' dW_1(t).
\end{align}
Consequently, the target/index portfolio is $Y(t) = Z(t)^{-1}$, which satisfies the SDE:
\begin{align}
\nonumber
\frac{dY(t)}{Y(t)} &= \norm{\theta}^2 dt+\theta'dW_1(t)\\
&=\norm{\sigma_{11}^{-1}b_1}^2 dt+((\sigma_{11} \sigma_{11}')^{-1}b_1)'\sigma_{11} dW_1(t).
\end{align}
Let us define $\eta' = \begin{bmatrix} ((\sigma_{11} \sigma_{11}')^{-1}b_1)' & \mathbf{0} \end{bmatrix}$ to be the associated portfolio vector for 
the target/index portfolio. Theorem \ref{main} takes the following form in this context. 
\begin{proposition}
\label{pk}
The optimal portfolio which minimizes the capital at risk, and satisfies the correlation constraint is:
\begin{align}
\label{optpipk}
\pi^*= \frac{\sqrt{1-\delta^2}\left(\frac{z_\alpha}{\sqrt{T}}+\sqrt{1-\delta^2}\norm{\sigma_{22}^{-1}b_2-\sigma_{22}^{-1}\sigma_{21}\sigma_{11}^{-1}b_1}- \delta \norm{\sigma_{11}^{-1}b_1}\right)^+}{T\norm{\sigma_{22}^{-1}b_2-\sigma_{22}^{-1}\sigma_{21}\sigma_{11}^{-1}b_1}}\left((\sigma\sigma')^{-1}bT-\lambda^*\eta\right).
\end{align}
Here $\lambda^*$ is given by:
\begin{align}
\label{optlambdapk}
\lambda^*=T\left(1+\frac{\delta}{\norm{\sigma_{11}^{-1}b_1}\sqrt{1-\delta^2}}\norm{\sigma_{22}^{-1}b_2-\sigma_{22}^{-1}\sigma_{21}\sigma_{11}^{-1}b_1}\right).
\end{align}
\end{proposition}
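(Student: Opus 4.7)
The plan is to recognize that Proposition~\ref{pk} is not a new optimization problem but a direct specialization of Theorem~\ref{main} once $\eta$ is fixed to be the pricing-kernel-inverse portfolio. So I would simply substitute $\eta' = \bigl[((\sigma_{11}\sigma_{11}')^{-1}b_1)'\ \ \mathbf{0}\bigr]$ into \eqref{pic}, and simplify each of the three scalar quantities $\norm{\sigma'\eta}$, $b'\eta$, and $\sqrt{\norm{\sigma^{-1}b}^2\norm{\sigma'\eta}^2-(b'\eta)^2}$ that appear there, using the block-triangular form \eqref{volmat} of $\sigma$ and $\sigma^{-1}$.

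First I would compute the easy scalars. Multiplying $\sigma'$ (which is block upper triangular) against $\eta$ immediately gives $\sigma'\eta = \bigl[\sigma_{11}'(\sigma_{11}\sigma_{11}')^{-1}b_1;\ \mathbf{0}\bigr]$, whence
\begin{equation*}
\norm{\sigma'\eta}^2 \;=\; b_1'(\sigma_{11}\sigma_{11}')^{-1}b_1 \;=\; \norm{\sigma_{11}^{-1}b_1}^2, \qquad b'\eta \;=\; b_1'(\sigma_{11}\sigma_{11}')^{-1}b_1 \;=\; \norm{\sigma_{11}^{-1}b_1}^2.
\end{equation*}
Next, from the block form of $\sigma^{-1}$, I get $\sigma^{-1}b = \bigl[\sigma_{11}^{-1}b_1;\ \sigma_{22}^{-1}b_2 - \sigma_{22}^{-1}\sigma_{21}\sigma_{11}^{-1}b_1\bigr]$, so
\begin{equation*}
\norm{\sigma^{-1}b}^2 \;=\; \norm{\sigma_{11}^{-1}b_1}^2 \;+\; \norm{\sigma_{22}^{-1}b_2 - \sigma_{22}^{-1}\sigma_{21}\sigma_{11}^{-1}b_1}^2.
\end{equation*}

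The key algebraic step is the clean factorisation
\begin{equation*}
\norm{\sigma^{-1}b}^2\norm{\sigma'\eta}^2 - (b'\eta)^2 \;=\; \norm{\sigma_{11}^{-1}b_1}^2\,\norm{\sigma_{22}^{-1}b_2 - \sigma_{22}^{-1}\sigma_{21}\sigma_{11}^{-1}b_1}^2,
\end{equation*}
which drops out of the previous three identities after a one-line cancellation. Taking the square root and using positivity of $\norm{\sigma_{11}^{-1}b_1}$, one can pull the factor $\norm{\sigma_{11}^{-1}b_1}$ through the positive part in the numerator of \eqref{pic} and cancel it against the $\norm{\sigma_{11}^{-1}b_1}$ appearing in the denominator; after this cancellation the prefactor in \eqref{pic} reduces exactly to the prefactor in \eqref{optpipk}. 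The same substitutions applied to the expression for $\lambda^*$ in Theorem~\ref{main} yield \eqref{optlambdapk} directly, with the $T\cdot b'\eta/\norm{\sigma'\eta}^2$ term collapsing to $T$ because $b'\eta = \norm{\sigma'\eta}^2$.

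I do not expect any real obstacle: the entire proof is bookkeeping. The only point that requires a moment of care is justifying that the scalar $\norm{\sigma_{11}^{-1}b_1}$ can be factored through the $(\cdot)^+$ operation, which is immediate from $\norm{\sigma_{11}^{-1}b_1}>0$ (ensured since $b_1 \ne 0$ as the target portfolio has strictly positive excess return $b'\eta = \norm{\sigma_{11}^{-1}b_1}^2 > 0$). Once this is observed, the two formulas \eqref{optpipk} and \eqref{optlambdapk} follow by direct substitution, with no further optimization argument needed beyond what Theorem~\ref{main} already supplies.
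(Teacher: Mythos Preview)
Your proposal is correct and follows exactly the same route as the paper: the paper's proof simply states that the result follows from Theorem~\ref{main} by computing the three scalars $\norm{\sigma'\eta}=\norm{\sigma_{11}^{-1}b_1}$, $b'\eta=\norm{\sigma_{11}^{-1}b_1}^2$, and $\sqrt{\norm{\sigma^{-1}b}^2\norm{\sigma'\eta}^2-(b'\eta)^2}=\norm{\sigma_{11}^{-1}b_1}\,\norm{\sigma_{22}^{-1}b_2-\sigma_{22}^{-1}\sigma_{21}\sigma_{11}^{-1}b_1}$ and plugging them in. Your write-up is in fact more detailed than the paper's, including the explicit block computations and the justification for pulling $\norm{\sigma_{11}^{-1}b_1}$ through the positive part.
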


\begin{proof}
See the appendix.
\end{proof}

Closed form expressions of this proposition allow us to explore the impact of correlation constraint on the portfolio diversification. In the following an improvement in portfolio diversification due to correlation constraint is established. To distinguish between the optimal portfolios found in proposition \ref{CaR-min} and theorem \ref{main}, from now on all the optimal variables for the problem \textit{with} correlation constraint are going to be shown by subscript ``$c$'', like $\pi^*_c$. Let us state the result.
\begin{proposition}
\label{var-reduction}
The optimal portfolio of the constrained problem $\pi^*_c$ is more diversified than the optimal portfolio of the unconstrained problem $\pi^*.$
That is to say
\begin{align}
\label{varcompar}
\text{Var}(\log X^{\pi^*}(T)) \geq \text{Var}(\log X^{\pi^*_c}(T)).
\end{align}
\end{proposition}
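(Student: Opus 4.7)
The plan is to reduce \eqref{varcompar} to the pointwise comparison $\norm{\sigma'\pi^*}^2\ge\norm{\sigma'\pi^*_c}^2$ via the identity $\text{Var}(\log X^{\pi}(T))=T\norm{\sigma'\pi}^2$ from Section \ref{CaRwithout}, and then to extract this comparison from the CaR values already computed in Proposition \ref{CaR-min} and Theorem \ref{main}.

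First I would verify the structural identity
\begin{equation*}
\text{CaR}(\pi^*,\alpha,T)=-\frac{T}{2}\norm{\sigma'\pi^*}^2.
\end{equation*}
From \eqref{opt-pi} direct substitution gives $\norm{\sigma'\pi^*}=(z_\alpha/\sqrt{T}+\norm{\sigma^{-1}b})^+$, and this matches \eqref{opt-CaR} on the nose. The analogous identity for the constrained optimizer, $\text{CaR}(\pi^*_c,\alpha,T)=-\frac{T}{2}\norm{\sigma'\pi^*_c}^2$, is less immediate: starting from \eqref{pic} and using the closed-form value of $\lambda^*$, one expands $\norm{\sigma^{-1}bT-\lambda^*\sigma'\eta}^2$, invokes the identity $(\sigma^{-1}b)'\sigma'\eta=b'\eta$ to collapse the cross term, and after cancellation arrives at
\begin{equation*}
\norm{\sigma^{-1}bT-\lambda^*\sigma'\eta}^2=\frac{T^{2}\bigl(\norm{\sigma^{-1}b}^2\norm{\sigma'\eta}^2-(b'\eta)^2\bigr)}{(1-\delta^2)\norm{\sigma'\eta}^2}.
\end{equation*}
Multiplying by the squared prefactor appearing in \eqref{pic} and collecting terms shows that $\norm{\sigma'\pi^*_c}^2$ equals the squared positive-part factor of \eqref{pic} divided by $\norm{\sigma'\eta}^2$; matching this against the CaR expression in Theorem \ref{main} then yields the desired identity.

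With both identities in hand the proof closes cleanly. Any $\pi$ feasible for \eqref{riskmin} is a fortiori feasible for \eqref{riskminu}, so $\text{CaR}(\pi^*,\alpha,T)\le\text{CaR}(\pi^*_c,\alpha,T)$; substituting the two identities and dividing by the negative constant $-T/2$ flips the inequality to $\norm{\sigma'\pi^*}^2\ge\norm{\sigma'\pi^*_c}^2$, which upon multiplication by $T$ is precisely \eqref{varcompar}.

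The main obstacle is the algebraic collapse behind the second identity. The feasibility argument and the reading off of the unconstrained identity are essentially by inspection, but the bookkeeping around $\lambda^*$ and the auxiliary quantity $K=\sqrt{\norm{\sigma^{-1}b}^2\norm{\sigma'\eta}^2-(b'\eta)^2}$ needs to be done carefully so that the $1-\delta^2$ factors line up and everything simplifies to the clean form displayed above; this is where I expect to spend the bulk of the computational effort.
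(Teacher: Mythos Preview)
Your argument is correct but follows a different route from the paper. The paper works in the special benchmark setting of Section~\ref{CaRc}: it introduces $\theta_1=\norm{\sigma_{11}^{-1}b_1}$ and $\theta_2=\norm{\sigma_{22}^{-1}b_2-\sigma_{22}^{-1}\sigma_{21}\sigma_{11}^{-1}b_1}$, writes down the two variances explicitly as $T\bigl[(z_\alpha/\sqrt{T}+\sqrt{\theta_1^2+\theta_2^2})^+\bigr]^2$ and $T\bigl[(z_\alpha/\sqrt{T}+\sqrt{1-\delta^2}\,\theta_2-\delta\theta_1)^+\bigr]^2$, and finishes with the elementary inequality $\sqrt{\theta_1^2+\theta_2^2}\ge\sqrt{1-\delta^2}\,\theta_2-\delta\theta_1$. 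Your approach is more structural: the identity $\text{CaR}=-\tfrac{T}{2}\norm{\sigma'\pi}^2$ at both optima, combined with the trivial feasibility bound $\text{CaR}(\pi^*)\le\text{CaR}(\pi^*_c)$, gives the result without ever computing the variances. The advantage of your route is that it works verbatim in the general setting of Theorem~\ref{main}, not just for the pricing-kernel benchmark; the advantage of the paper's route is that it produces the explicit variance formulas which are reused in Section~5.1. Incidentally, your second identity can be obtained with less algebra than you outline: since $\norm{\sigma'\pi^*_c}=\ve^*(\lambda^*)$ and $\ve^*$ is the minimizer of $\tfrac{T}{2}\ve^2-\ve\bigl(z_\alpha\sqrt{T}-\lambda^*\delta\norm{\sigma'\eta}+\norm{\sigma^{-1}bT-\lambda^*\sigma'\eta}\bigr)$, the minimum value is $-\tfrac{T}{2}\ve^{*2}$, and this minimum equals $\text{CaR}(\pi^*_c)$ once you use the binding constraint to eliminate the $\lambda^*$ terms.
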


\begin{proof}
See the appendix.
\end{proof}

\subsection{Diversification and Risk Control over Market Downfalls}
As an application of our portfolio optimization problem with correlation constraint, we would like to explore the diversification during the period of a market collapse. In order to be able to track the downfalls in the market and for the ease of exposition, we assume that the first type of stocks have the uncertainty driven by a single Brownian motion only. Then by letting large enough values of $\sigma_{11}$ (which would happen during a time of market crash), we  consider the asymptotic composition of the optimal portfolios (constrained and unconstrained). Direct computations lead to: 
\begin{align}
\lim_{\sigma_{11} \to \infty} \pi^* &= \left(\frac{z_\alpha}{\norm{\sigma_{22}^{-1}b_2}\sqrt{T}}+1 \right)^+ \begin{bmatrix} 0 \\ (\sigma_{22}\sigma'_{22})^{-1}b_2\end{bmatrix}.\\
\lim_{\sigma_{11} \to \infty} \pi^*_c &=\sqrt{1-\delta^2}\left(\frac{z_\alpha}{\norm{\sigma_{22}^{-1}b_2}\sqrt{T}}+\sqrt{1-\delta^2}\right)^+\begin{bmatrix} 0 \\ (\sigma_{22}\sigma'_{22})^{-1}b_2\end{bmatrix}.
\end{align}
Let us notice the zero investment in the first type of stocks in both constrained and unconstrained optimal portfolios. Correlation constraint on the other hand 
lowers the investment in the second class of stocks. The diversification benefit of the correlation constraint can also be seen from considering the asymptotic variances of $\log X^{\pi^*}(T)$ and $\log X^{\pi^*_c}(T)$, which are computed to be:
\begin{align}
\lim_{\sigma_{11} \to \infty} \text{Var}(\log X^{\pi^*}(T)) &= T\left[\left(\frac{z_\alpha}{\sqrt{T}}+\norm{\sigma_{22}^{-1}b_2}\right)^+\right]^2.\\
\lim_{\sigma_{11} \to \infty} \text{Var}(\log X^{\pi^*_c}(T)) &= T\left[\left(\frac{z_\alpha}{\sqrt{T}}+\sqrt{1-\delta^2}\norm{\sigma_{22}^{-1}b_2}\right)^+\right]^2.
\end{align}

\subsection{Numerical Experiments}
In this section we consider some numerical examples to shed light in the portfolio diversification achieved by imposing a correlation constraint. In running our experiments we employ the market data of \cite{dmitravsinovic2011dynamic}.  Consequently, the market consists of three stocks and one of them is first type stock. For all proceeding examples, we consider the stock returns to have the constant standard deviations as below:
\begin{align}
\nonumber
\gamma_1 = 0.2, \quad \gamma_2 = 0.25, \quad \gamma_3 = 0.3
\end{align} 
\begin{example}
Here we focus on exploring the effect of market asset volatility on the variance of the log return. From the correlation matrix of stocks we find its corresponding volatility matrix \eqref{volmat} by Cholesky decomposition. Next by increasing $\sigma_{11}$ we track the behaviour of log return variances (these are seen as measures of diversification). Two sets of correlation matrices and excess mean returns of stocks as in \cite{dmitravsinovic2011dynamic} are investigated:
\begin{align}
\label{firstset}
\rho^{(1)} = \begin{bmatrix}
1.0 & -0.6 & -0.8 \\
-0.6 & 1.0 & 0.5 \\
-0.8 & 0.5 & 1.0
\end{bmatrix}, \quad
b^{(1)} = \begin{bmatrix} 0.07 & 0.05 & 0.03 \end{bmatrix} '
\end{align}

\begin{align}
\label{secondset}
\rho^{(2)} = \begin{bmatrix}
1.0 & -0.3 & 0.5\\
-0.3 & 1.0 & -0.9\\
0.5 & -0.9 & 1.0
\end{bmatrix}, \quad
b^{(2)} = \begin{bmatrix}0.03 & 0.05 & 0.07 \end{bmatrix} '
\end{align}

In Fig. \ref{fig:var}, the plots of log return variances for these two sets of data are presented, wherein both the associated graphs for the constrained problem are depicted for three different values of the correlation threshold $\delta.$ Notice that the highest curves in each plot corresponds to the unconstrained optimal portfolio. All the graphs are plotted for the fixed values of time horizon $T=5$ and confidence level $\alpha = 0.05.$
It's worth to look at the effect of $\delta$ on the variance; higher values of $\delta$ lead to more diversification (lower variance). Extreme situations may happen: notice that for $\delta=0.9$ in Fig. \ref{fig:Var1} there is zero variance, which means pure risk free investment. The diversification of the constrained versus unconstrained optimal portfolios is presented in both figures.
\begin{figure}
\centering
\begin{subfigure}{0.5\textwidth}
  \centering
  \includegraphics[width=\linewidth]{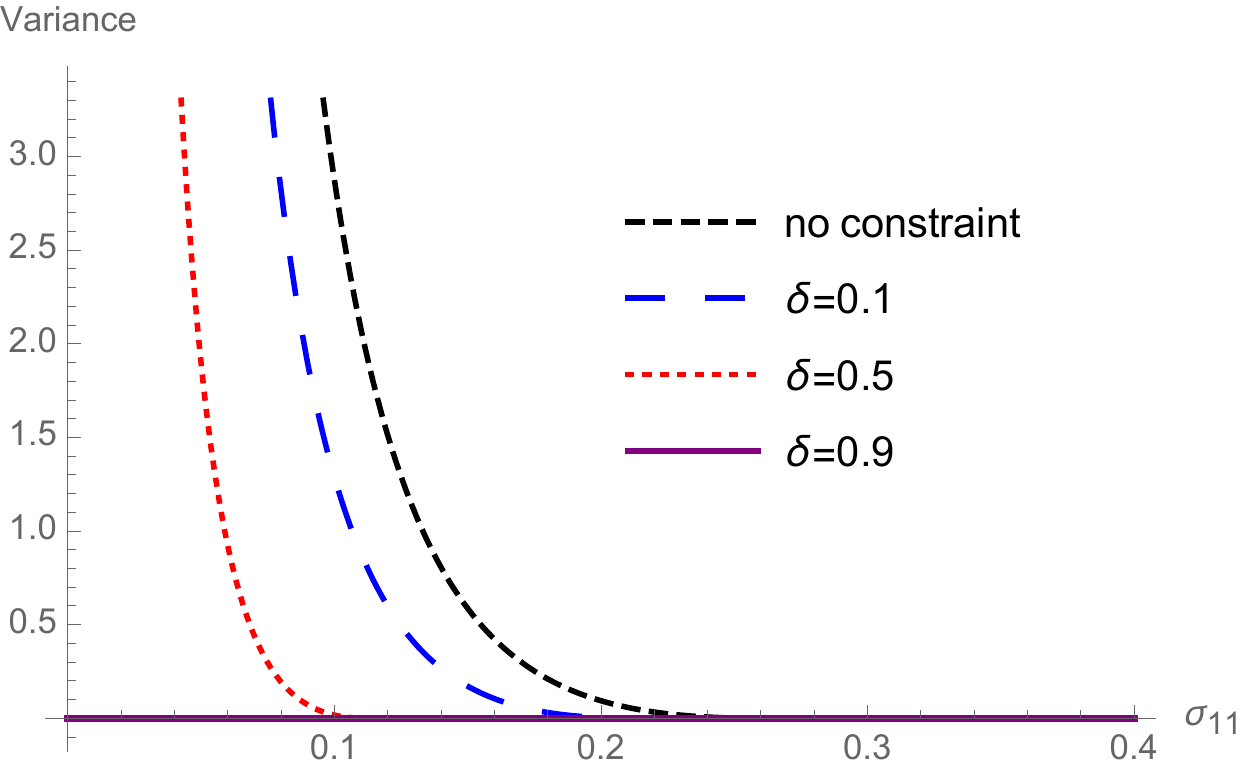}
  \caption{variance for 1st set of data}
  \label{fig:Var1}
\end{subfigure}%
\begin{subfigure}{.5\textwidth}
  \centering
  \includegraphics[width=\linewidth]{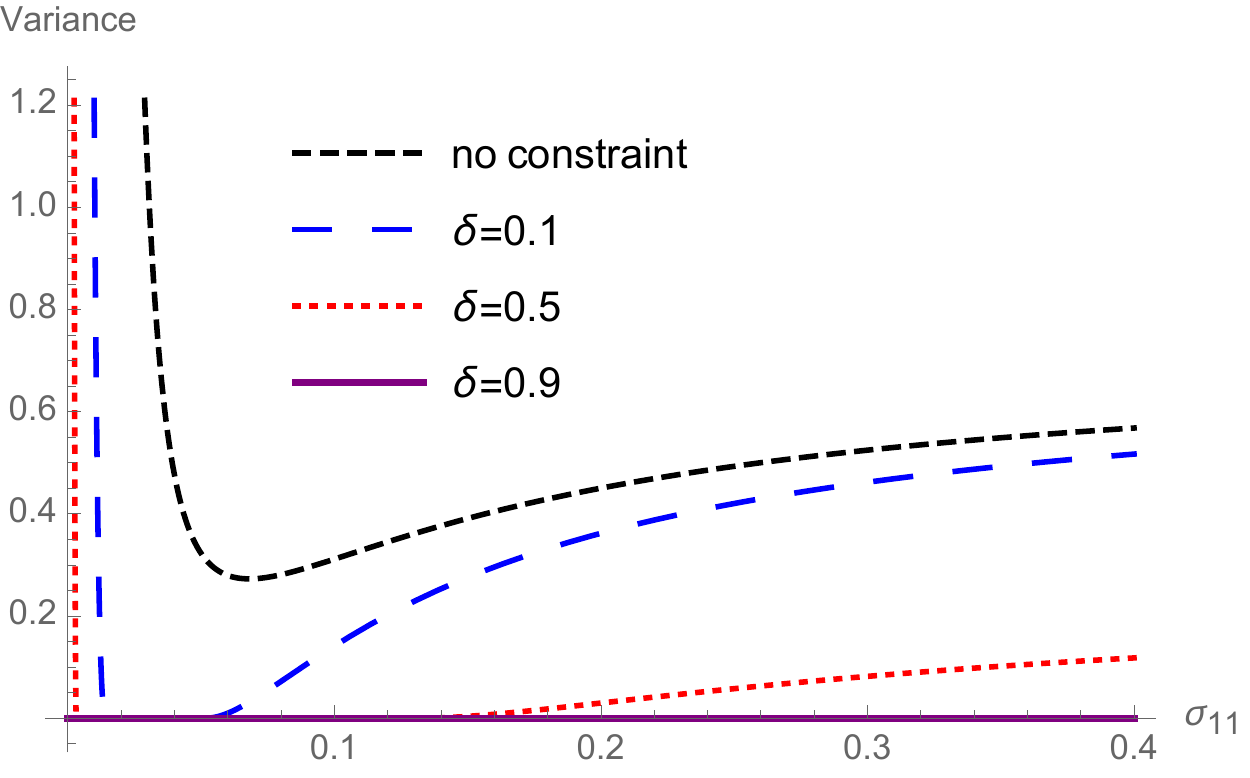}
  \caption{variance for 2nd set of data}
  \label{fig:Var2}
\end{subfigure}
\caption{Log return variance}
\label{fig:var}
\end{figure}
\end{example}
\begin{example}
In this example, we illustrate how the fraction of investment on risk less asset, i.e $\pi_0$, is changing as a response to an increase in market volatility $\sigma_{11}$. The findings are presented for both data sets of \eqref{firstset} and \eqref{secondset}. Since we didn't assume any restriction on borrowing/shortselling, negative values occur in some instances for optimal proportion of riskless investment in both plots in Fig. \ref{fig:pi0}. Let us notice from both graphs that the bigger $\delta$ the higher the riskless investment (for $\delta=0.9$ there is no investment on stocks and all portfolio is invested in the riskless asset, which is why the log return has zero variance in this case). One could also observe the pattern of investing more on riskless asset in Fig. \ref{fig:pi01} as a consequence of increase in market volatility, regardless of $\delta$. This observation does not occur in Fig. \ref{fig:pi02}, because of the structure of stock correlation matrix.
\begin{figure}
\centering
\begin{subfigure}{0.5\textwidth}
  \centering
  \includegraphics[width=\linewidth]{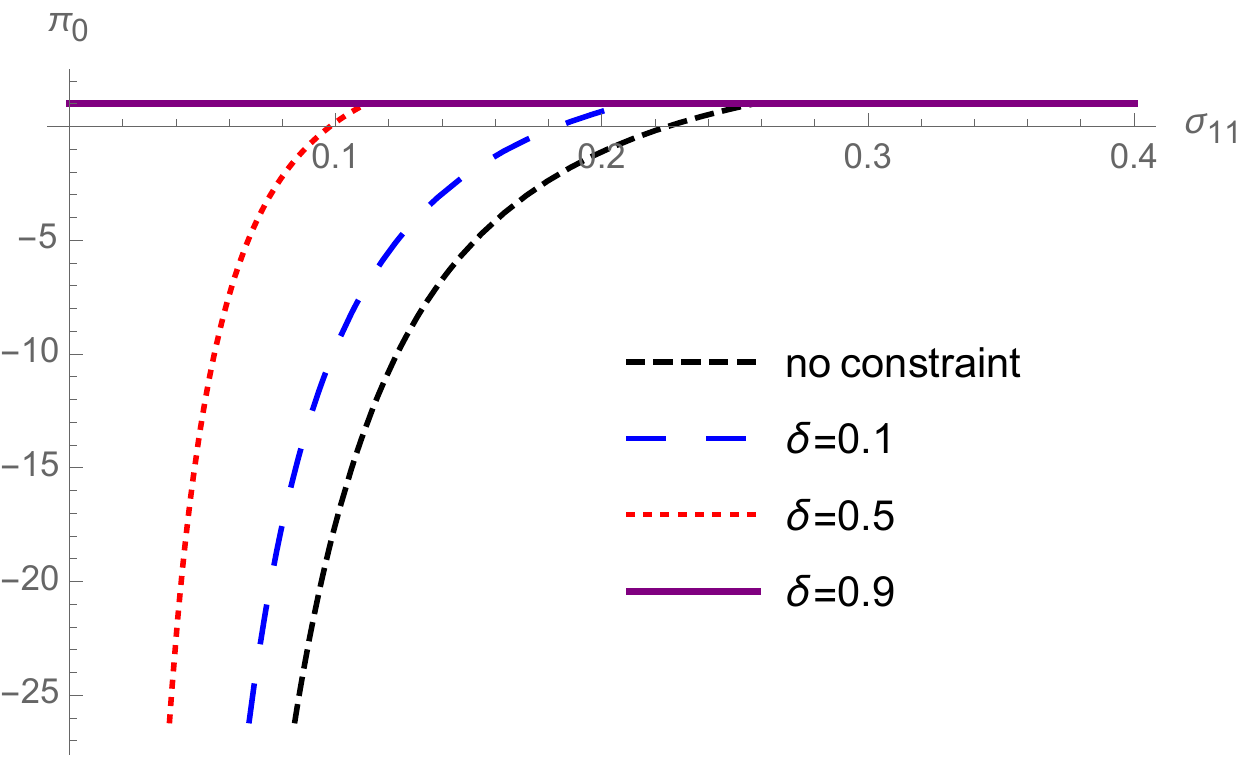}
  \caption{$\pi^*_{0,c}$ for 1st set of data}
  \label{fig:pi01}
\end{subfigure}%
\begin{subfigure}{.5\textwidth}
  \centering
  \includegraphics[width=\linewidth]{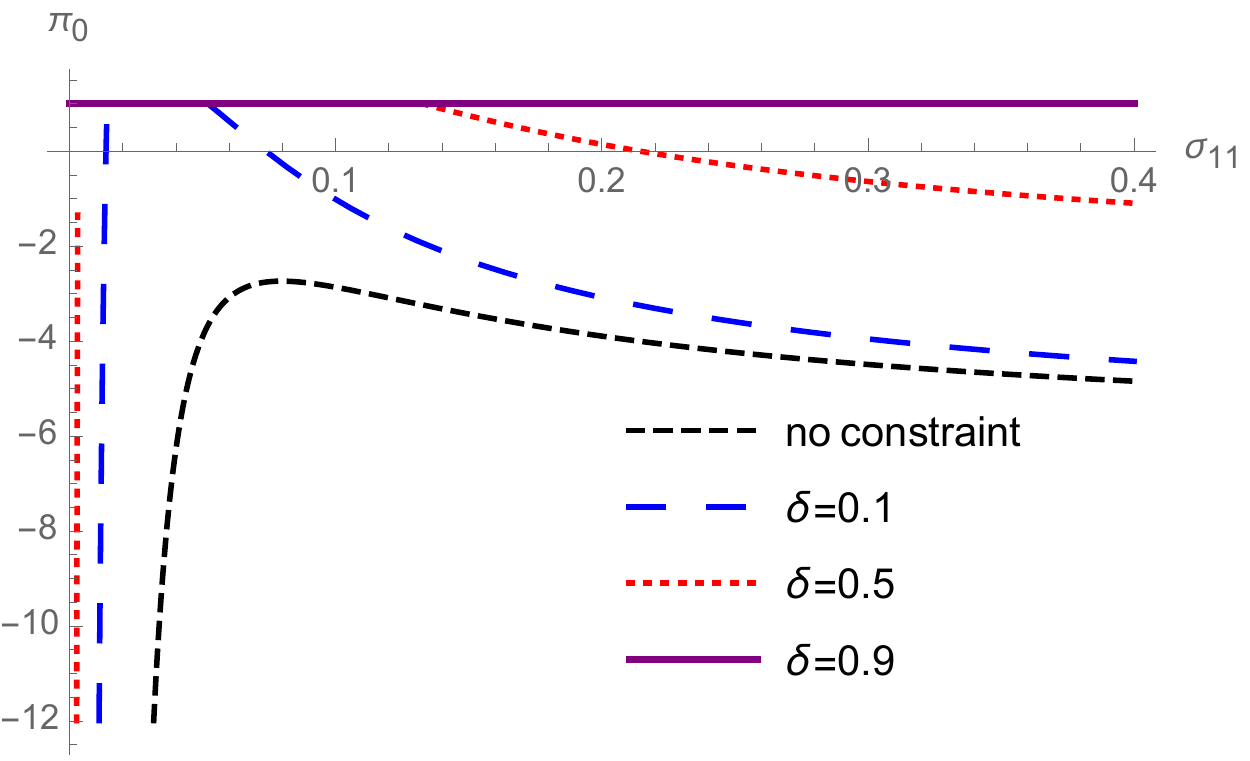}
  \caption{$\pi^*_{0,c}$ for 2nd set of data}
  \label{fig:pi02}
\end{subfigure}
\caption{Risk-less investment fraction}
\label{fig:pi0}
\end{figure}
\end{example}

\begin{example}
In the two previous examples, it is illustrated that an increase in $\delta$ leads to a better diversified portfolio. In this example we want to investigate this effect more precisely. Fig. \ref{fig:Varred} shows the percentage of variance reduction from unconstrained log return due to the correlation constraint. Both graphs show that by increasing $\delta$ the reduction in variance increases. The dotted line draws the $50\%$ variance reduction, which intercepts the curves at higher values of $\delta$, as we consider more volatile cases in the second sets of data. \begin{figure}
\centering
\begin{subfigure}{0.5\textwidth}
  \centering
  \includegraphics[width=\linewidth]{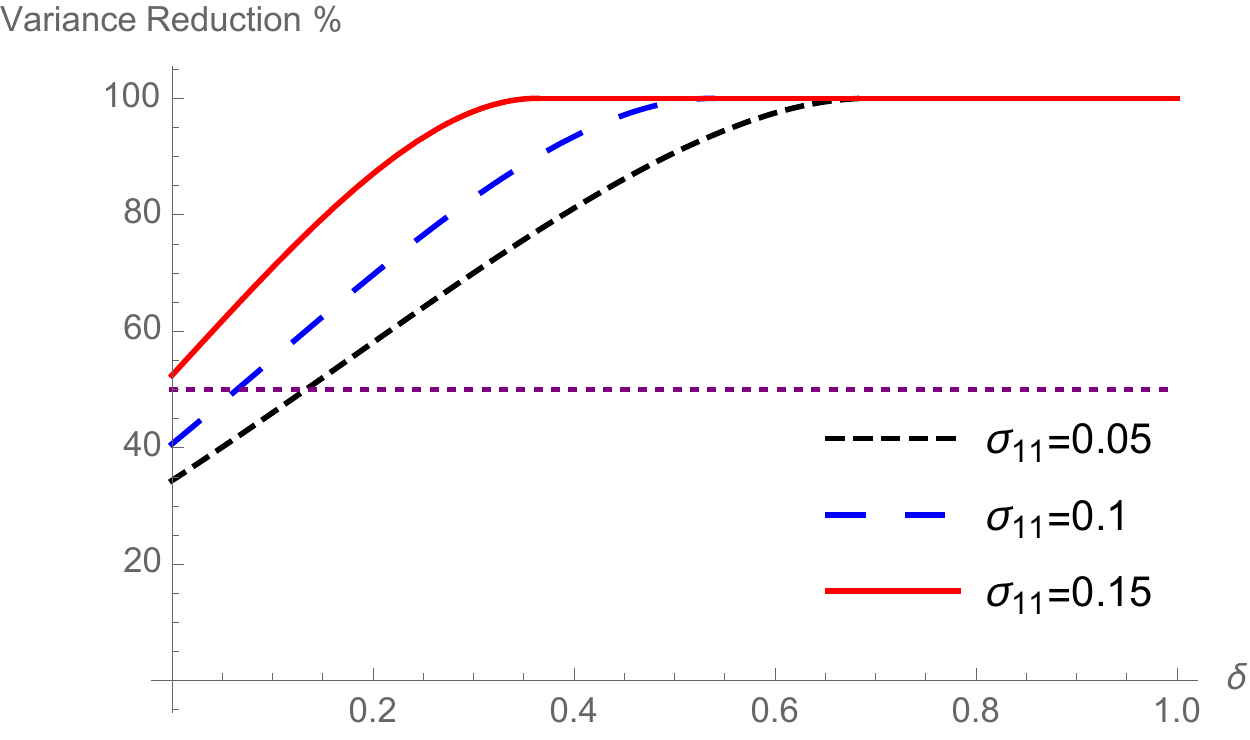}
  \caption{Variance reduction for 1st set of data}
  \label{fig:Varred1}
\end{subfigure}%
\begin{subfigure}{.5\textwidth}
  \centering
  \includegraphics[width=\linewidth]{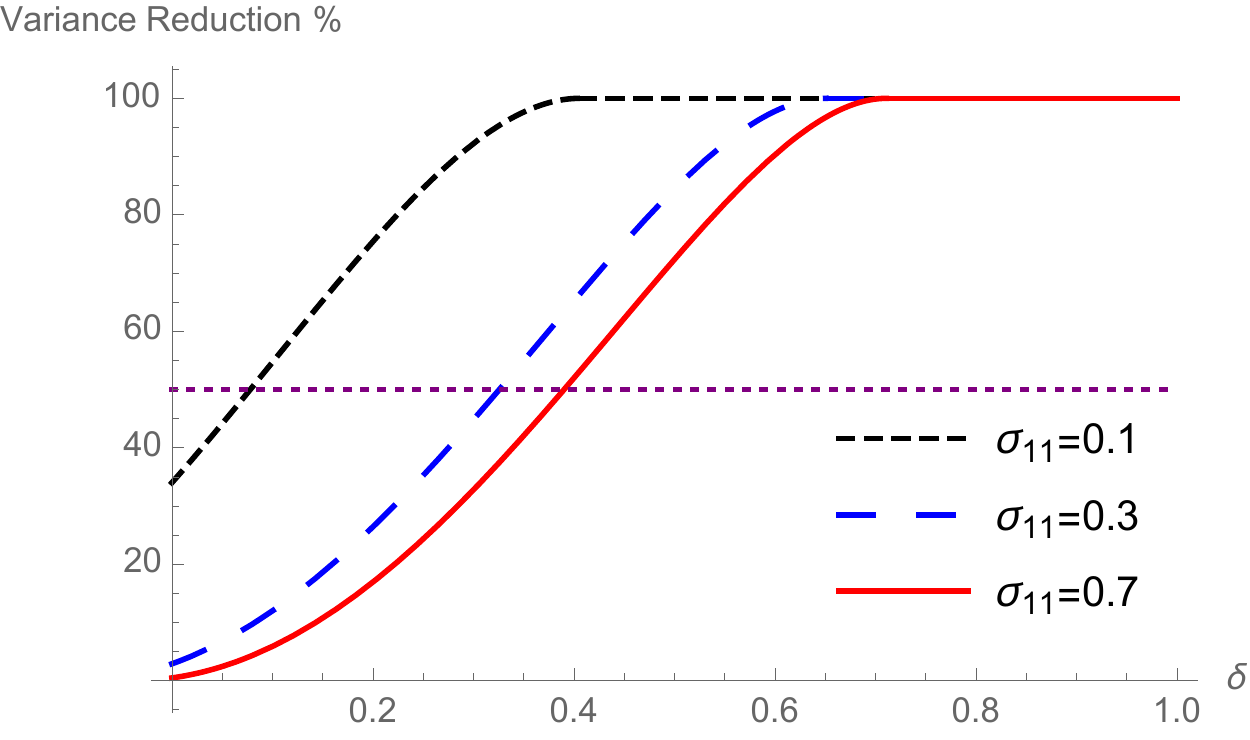}
  \caption{Variance reduction for 2nd set of data}
  \label{fig:Varred2}
\end{subfigure}
\caption{Variance reduction percentage}
\label{fig:Varred}
\end{figure}
\end{example}
\section{Conclusion}
In the Black-Scholes setting with constant parameters, the optimal portfolio which minimizes the capital at risk, and achieves a negative prescribed correlation with the given financial index is analytically derived. Moreover, it is shown that under a special choice of the index the correlation constraint
leads to a more diversified portfolio (the terminal variance of constrained optimal portfolio is lower compared to the optimal unconstrained portfolio). We also show how the correlation constraint reduces the variance and increases the risk-free investment during market downfalls. Numerical experiments explore the effect on the optimal portfolio composition induced by the correlation constraint.
\label{conclusion}

\section{Appendix}

\begin{proof}[Proof of Proposition \ref{CaR-min}] The analysis is a two phase procedure: first find the optimal portfolio vector on the boundary of ellipse $\norm{\sigma' \pi} = \varepsilon$, and then find the optimal ellipse parameter $\varepsilon$. Let's restrict the optimization domain of the minimization problem to an ellipse boundary, $\norm{\sigma' \pi}=\varepsilon$, and find the optimal portfolio vector on this set. Thus, in the first step the objective function is:
\begin{align}
\label{CaRe}
\text{CaR}(\pi,\alpha,T) = -b'\pi T+\frac{1}{2}\varepsilon^2T-z_\alpha \varepsilon \sqrt{T}.
\end{align}
We have to maximize the linear term $b'\pi$ over the boundary of the ellipsoid, in order to get the minimal CaR value on this set. By Cauchy-Schwarz inequality
\begin{align}
\nonumber
b'\pi = (\sigma^{-1}b)'(\sigma' \pi) &\leq \norm{\sigma^{-1}b} \norm{\sigma' \pi}\\
&= \norm{\sigma^{-1}b} \varepsilon.
\end{align}
The equality is attained for $\sigma' \pi = \frac{\varepsilon}{\norm{\sigma^{-1}b}} \sigma^{-1}b$ which leads to the optimal:
\begin{align}
\pi^*_{\varepsilon} = \frac{\varepsilon}{\norm{\sigma^{-1}b}} (\sigma \sigma')^{-1}b.
\end{align}
Substituting this choice of portfolio into \eqref{CaRe} leads to:
\begin{align}
\nonumber
\text{CaR}(\pi^*_{\varepsilon},\alpha,T) &= \frac{\varepsilon^2T}{2}-\varepsilon T\left(\frac{z_\alpha}{\sqrt{T}}+\norm{\sigma^{-1}b}\right)\\
&= \frac{\varepsilon T}{2} \left[ \varepsilon-2\left(\frac{z_\alpha}{\sqrt{T}}+\norm{\sigma^{-1}b}\right)\right].
\end{align}
which is minimized in $\varepsilon$ by
\begin{equation}
\varepsilon^*= \left(\frac{z_\alpha}{\sqrt{T}}+\norm{\sigma^{-1}b}\right)^+.
\end{equation}
Which concludes the equations \eqref{opt-pi} and \eqref{opt-CaR}.
\end{proof}

\begin{proof}[Proof of Theorem \ref{main}]
The objective function 
$$ \left(-b'\pi T+\frac{1}{2}\norm{\sigma' \pi}^2T-z_\alpha \norm{\sigma' \pi} \sqrt{T}\right)$$
is convex because of assumption $z_\alpha\leq 0.$
The constraint set
$$\delta \norm{\sigma' \eta}\norm{\sigma' \pi}+\pi'\sigma \sigma' \eta \leq 0$$
is a convex set in $\mathbb{R}^d$ being the level set of a convex function. In the light of the assumption $b'\eta > 0,$ the global minimum of this objective
function is not in the constraint set. Therefore the minimum is attained on the boundary, hence the optimization problem is equivalent to  
\begin{equation}
\label{optprob1}
\begin{aligned}
\min ~ &\left(-b'\pi T+\frac{1}{2}\norm{\sigma' \pi}^2T-z_\alpha \norm{\sigma' \pi} \sqrt{T}\right)\\
\text{subject to} ~~ & \delta \norm{\sigma' \eta}\norm{\sigma' \pi}+\pi'\sigma \sigma' \eta=0.
\end{aligned}
\end{equation}
Next, let us introduce the Lagrangian:
\begin{align}
L(\pi,\lambda) &= -b'\pi T+\frac{1}{2}\norm{\sigma' \pi}^2T-z_\alpha \norm{\sigma' \pi} \sqrt{T} + \lambda(\delta \norm{\sigma' \eta}\norm{\sigma' \pi}+\pi'\sigma \sigma' \eta).
\end{align}
The Lagrangian is minimized in a two phase procedure: first find the optimal $\pi$ on a boundary of an ellipse, and then find the optimal ellipse parameter.

\begin{align}
\label{2step}
\nonumber
\min_{ \pi} L(\pi,\lambda) &= \min_{\varepsilon \geq 0} ~ \min_{\norm{\sigma' \pi}=\ve} L(\pi,\lambda)\\
&= \min_{\varepsilon \geq 0} ~\min_{\norm{\sigma' \pi}=\ve} \left[-b'\pi T+\frac{1}{2}\ve^2T-z_\alpha \ve \sqrt{T} +\lambda(\delta \norm{\sigma' \eta}\ve+\pi'\sigma \sigma' \eta)\right].
\end{align} 
In the first step one has to solve:
\begin{align}
\text{maximize} \quad \left[ b'\pi T - \lambda \eta' \sigma \sigma' \pi\right] ~ \text{subject to} \quad \norm{\sigma' \pi} =\ve.
\end{align}
By Cauchy-Schwarz inequality, we have:
\begin{align}
\nonumber
 b'\pi T - \lambda \eta' \sigma \sigma' \pi  &= (\sigma^{-1}bT - \lambda \sigma' \eta)' (\sigma' \pi)\\
 & \leq \norm{\sigma^{-1}bT - \lambda \sigma' \eta} \ve.
\end{align}
The equality occurs at $\sigma ' \pi_\ve = \frac{\ve}{\norm{\sigma^{-1}bT - \lambda \sigma' \eta}} (\sigma^{-1}bT - \lambda \sigma' \eta)$, which leads to:
\begin{align}
\label{pive}
\pi_\ve = \frac{\ve}{\norm{\sigma^{-1}bT - \lambda \sigma' \eta}} \left((\sigma \sigma')^{-1}bT-\lambda\eta\right).
\end{align}
By substituting \eqref{pive} back into the \eqref{2step}, the second minimization problem will reduce to:
\begin{align}
\label{quadinf}
\nonumber
\min_{\ve \geq 0} \quad & \left[\frac{1}{2}\ve^2T-z_\alpha \ve \sqrt{T}+ \lambda \delta \ve \norm{\sigma' \eta}-\ve \norm{\sigma^{-1}bT - \lambda \sigma' \eta} \right]
\end{align}
which is solved by
\begin{equation}
\label{optve}
\ve^*(\lambda) = \frac{1}{T} \left(z_\alpha \sqrt{T} -\lambda \delta \norm{\sigma'\eta}+\norm{\sigma^{-1}bT - \lambda \sigma' \eta}\right)^+.
\end{equation}
The optimal portfolio is
\begin{align}
\label{optpilam}
\pi^* = \frac{\ve^*(\lambda^*)}{\norm{\sigma^{-1}bT - \lambda^* \sigma' \eta}} \left((\sigma \sigma')^{-1}bT-\lambda^*\eta\right),
\end{align}
where $\lambda^*$ is derived from

\begin{align}
 \label{cs}
 \lambda^* (\delta \norm{\sigma' \eta}\norm{\sigma'\pi^*}+\pi^{*'}\sigma\sigma'\eta) = 0.
 \end{align}
By direct computations this becomes
\begin{align}
\lambda^{*^2} \norm{\sigma' \eta}^4-2\lambda^*\norm{\sigma' \eta}^2 b' \eta T+\frac{T^2}{1-\delta^2}\left((b' \eta)^2-\delta^2\norm{\sigma' \eta}^2\norm{\sigma^{-1}b}^2\right)=0,
\end{align}
which has the positive solution
\begin{align}
\lambda^* = \frac{1}{\norm{\sigma' \eta}^2} \left(b' \eta T+\frac{T\delta}{\sqrt{1-\delta^2}}\sqrt{\norm{\sigma^{-1}b}^2\norm{\sigma' \eta}^2 -(b' \eta)^2}\right).
\end{align}
 \end{proof}

\begin{proof}[Proof of Proposition \ref{pk}] 
Proof follows from Theorem \ref{main}. Given the volatility matrix and its inverse of \eqref{volmat}, and the prescribed portfolio vector for the benchmark process, one can readily find:
\begin{align}
\norm{\sigma' \eta} &= \norm{\sigma_{11}^{-1}b_1}.\\
b'\eta &= \norm{\sigma_{11}^{-1}b_1}^2.\\
\sqrt{\norm{\sigma^{-1}b}^2\norm{\sigma' \eta}^2 -(b' \eta)^2} &= \norm{\sigma_{11}^{-1}b_1}\norm{\sigma_{22}^{-1}b_2-\sigma_{22}^{-1}\sigma_{21}\sigma_{11}^{-1}b_1}.
\end{align}
By plugging these equalities into $\pi^*$ and $\lambda^*$ of Theorem \ref{main}, the equations \eqref{optpipk} and \eqref{optlambdapk} yield.
\end{proof}

\begin{proof}[Proof of Proposition \ref{var-reduction}]
Let us denote by 
$$\theta_1 = \norm{\sigma_{11}^{-1}b_1},$$
and
$$\theta_2= \norm{\sigma_{22}^{-1}b_2-\sigma_{22}^{-1}\sigma_{21}\sigma_{11}^{-1}b_1}.$$
Direct computations lead to:
\begin{align}
\text{Var}(\log X^{\pi^*}(T)) &= T\left[\left(\frac{z_\alpha}{\sqrt{T}}+\sqrt{\theta_2^2+\theta_1^2}\right)^+\right]^2.\\
\text{Var}(\log X^{{\pi_{c}}^*}(T)) &= T\left[\left(\frac{z_\alpha}{\sqrt{T}} +\sqrt{1-\delta^2}\theta_2 -\delta \theta_1\right)^+\right]^2.
\end{align}
In the light of the inequality:
\begin{align}
\nonumber
\sqrt{\theta_2^2+\theta_1^2} \geq \sqrt{1-\delta^2}\theta_2 -\delta \theta_1,
\end{align}
the claim yields.
\end{proof}

\bibliographystyle{unsrt}
\bibliography{ref}

\end{document}